\documentclass[11pt]{article}

\usepackage[a4paper,margin=1in]{geometry}
\usepackage{amsmath,amssymb,amsthm}
\usepackage{mathtools}
\usepackage{mathrsfs}
\usepackage{hyperref}
\usepackage{enumitem}

\hypersetup{
  colorlinks=true,
  linkcolor=blue,
  citecolor=blue,
  urlcolor=blue
}

\newtheorem{theorem}{Theorem}[section]
\newtheorem{proposition}[theorem]{Proposition}

\theoremstyle{definition}
\newtheorem{definition}[theorem]{Definition}
\newtheorem{example}[theorem]{Example}
\theoremstyle{remark}
\newtheorem{remark}[theorem]{Remark}

\newcommand{\grad}{\operatorname{grad}}
\newcommand{\diver}{\operatorname{div}}

\title{Frame Representation of the First-Order Part\\
of the Laplace--Beltrami Operator}
\author{A. G. Nuramatov}
\date{}

\begin{document}

\maketitle

\begin{abstract}

We investigate the geometric content of the first-order part of the
Laplace--Beltrami operator on an oriented Riemannian manifold.
Relative to an arbitrary local orthonormal frame, the first-order part
of the Laplace--Beltrami operator defines a distinguished vector field,
whose coefficients are expressed through the Levi--Civita connection
forms.

The corresponding one-form determines a covariant derivative whose
connection factorization removes the first-order part of the
Laplace--Beltrami operator and replaces it by a natural scalar
potential.

The construction is frame dependent. We show that its transformation
under local rotations of the orthonormal frame compensates the
corresponding variation of the sum-of-squares part, leaving the full
Laplace--Beltrami operator invariant.

These results reveal the geometric role of the first-order part of the
Laplace--Beltrami operator.

\end{abstract}
\section{Introduction}

The Laplace--Beltrami operator is one of the fundamental differential
operators of Riemannian geometry.  Its principal symbol is determined by
the metric and underlies its ellipticity and much of its analytic
behavior.  The lower-order part is equally necessary for invariance, but its 
geometric content is usually hidden within the Christoffel symbols..

The purpose of this paper is to identify the  component of the Levi--Civita
connection that determines  the first-order part of the
Laplace--Beltrami operator. Once this component has been identified, it naturally determines a
covariant derivative whose associated connection factorization removes
the first-order part of the Laplace--Beltrami operator.

In a local orthonormal frame
$\{e_1,\ldots,e_n\}$, the operator takes the form
\[
\Delta=\sum_{i=1}^n e_i^{\,2}+\vec\Omega,
\]
where $\vec\Omega$ is a first-order differential operator determined by
specific contractions of the connection coefficients.  The first-order part of the Laplace--Beltrami operator naturally
determines a distinguished vector field. Its coefficients are expressed
through the Levi--Civita connection forms, while the corresponding
one-form determines a covariant derivative leading to a natural
connection factorization of the Laplace--Beltrami operator.

The construction is local and depends on the chosen orthonormal frame.
This dependence is not a defect: the sum of squares $\sum_i e_i^2$ is
itself frame dependent when the change of frame varies from point to
point.  The variation of $\vec\Omega$ precisely compensates the
additional first-order terms produced by the transformed frame, so that
the complete operator remains invariant.

A particularly transparent interpretation appears when the frame is
induced by an orthogonal coordinate net.  In this case the connection
forms contain no coefficients with three pairwise distinct indices.
Consequently, every nonzero connection coefficient is a component of a
geodesic-curvature vector of a coordinate curve, and the full nonzero
connection data enter the first-order part of the operator.

The paper is organized as follows.  Section~2 fixes the moving-frame and
Hodge-theoretic conventions.  Section~3 constructs a distinguished $(n-1)$-form from the
Levi--Civita connection forms. Its Hodge dual defines the corresponding
one-form and vector field, which are then identified with the
first-order part of the Laplace--Beltrami operator in a moving frame.
Section~4 studies the behavior under local
frame rotations.  Section~5 treats orthogonal coordinate frames.
Section~6 gives a half-connection factorization that removes the
first-order term.  Section~7 describes the three-dimensional Darboux
interpretation.

\section{Moving Frames and Notation}

Let $(M,g)$ be an oriented Riemannian manifold of dimension $n$.  On a
coordinate neighborhood we choose an oriented orthonormal frame
\[
\{e_1,\ldots,e_n\}
\]
and its dual coframe
\[
\{\theta^1,\ldots,\theta^n\},
\qquad
\theta^i(e_j)=\delta^i_j,
\qquad
g=\sum_{i=1}^n \theta^i\otimes\theta^i.
\]

The Levi--Civita connection is represented by the connection one-forms
$\omega^i{}_j$, defined by
\[
\nabla e_j=\omega^i{}_j\,e_i.
\]
We write
\[
\omega^i{}_j=\omega^i{}_{jk}\,\theta^k,
\qquad
\omega^i{}_{jk}=\omega^i{}_j(e_k).
\]
Metric compatibility gives
\[
\omega^i_j+\omega^j_i=0,
\qquad
\omega^i{}_{jk}=-\omega^j{}_{ik}.
\]
The first and second Cartan structure equations are
\[
d\theta^i=-\omega^i{}_j\wedge\theta^j,
\]
\[
\mathcal R^i{}_j
=d\omega^i{}_j+\omega^i{}_k\wedge\omega^k{}_j,
\]
where $\mathcal R^i{}_j$ denotes the curvature two-form. 

For a smooth function $f$, the Hessian is
\[
\nabla^2f(X,Y)=X(Yf)-(\nabla_XY)f,
\]
and the Laplace--Beltrami operator is taken with the sign convention
\[
\Delta f=\operatorname{tr}_g\nabla^2f
=\sum_{i=1}^n\left(e_i e_i f-(\nabla_{e_i}e_i)f\right).
\]
Thus $\Delta=\diver\grad$ in the usual Riemannian convention.

The Hodge operator is denoted by
\[
*:\Lambda^pT^*M\longrightarrow\Lambda^{n-p}T^*M
\]
and is determined by the metric and orientation.  We use
\[
**\alpha=(-1)^{p(n-p)}\alpha,
\qquad \alpha\in\Lambda^pT^*M,
\]
and, for a one-form $\alpha$ and a two-form $\beta$,
\[
*\bigl(\alpha\wedge *\beta\bigr)
=(-1)^n\,\iota_{\alpha^\sharp}\beta.
\]
On one-forms the codifferential satisfies
\[
\delta=-*d*,
\]
although the main factorization formula will be written directly in
terms of $d$ and $*$.

Whenever no confusion can arise, a differential form is also regarded
as the operator of exterior multiplication by that form.  Thus
\[
\Omega\eta=\Omega\wedge\eta
\]
for every differential form $\eta$.
\section{A Distinguished Component of the Levi--Civita Connection}

We now identify the combination of the connection coefficients that
determines the first-order part of the Laplace--Beltrami operator.  The
construction is local and is valid for an arbitrary oriented
orthonormal frame.

\subsection{The $(n-1)$-form associated with the frame}

\begin{definition}
Define the $(n-1)$-form
\[
\widetilde\Omega
=
\sum_{1\le i<j\le n}
\omega^i{}_j\wedge *\bigl(\theta^i\wedge\theta^j\bigr).
\]
\end{definition}

Each connection form $\omega^i{}_j$ represents the infinitesimal
rotation of the frame in the oriented plane spanned by $e_i$ and $e_j$.
The factor $*(\theta^i\wedge\theta^j)$ is the oriented
$(n-2)$-dimensional element orthogonal to that plane.  Thus every
summand combines an infinitesimal rotation with its corresponding
orthogonal moment element.  In this sense $\widetilde\Omega$ may be
viewed as a rotational moment form associated with the chosen frame.

\begin{theorem}\label{thm:hodge-contraction}
The Hodge dual of $\widetilde\Omega$ is the one-form
\[
\Omega
:=(-1)^n*\widetilde\Omega
=
\sum_{i=1}^n
\left(\sum_{j=1}^n\omega^j{}_{ij}\right)\theta^i.
\]
\end{theorem}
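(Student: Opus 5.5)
The plan is to compute the Hodge dual summand by summand, using the contraction identity $*(\alpha\wedge *\beta)=(-1)^n\iota_{\alpha^\sharp}\beta$ from Section~2. For each pair $i<j$ we take $\alpha=\omega^i{}_j$ and $\beta=\theta^i\wedge\theta^j$. Since $*$ is linear, this gives
\[
(-1)^n*\widetilde\Omega=\sum_{i<j}\iota_{(\omega^i{}_j)^\sharp}\bigl(\theta^i\wedge\theta^j\bigr).
\]
The sign $(-1)^n$ in the definition of $\Omega$ is there to cancel exactly the sign produced by this identity.

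Next we evaluate each contraction in the orthonormal frame. Here $(\omega^i{}_j)^\sharp=\sum_k\omega^i{}_{jk}e_k$, so
\[
\iota_{(\omega^i{}_j)^\sharp}(\theta^i\wedge\theta^j)=\omega^i{}_{ji}\,\theta^j-\omega^i{}_{jj}\,\theta^i .
\]
Now we use the antisymmetry $\omega^i{}_{jk}=-\omega^j{}_{ik}$. It shows that the expression $\omega^i{}_{ji}\theta^j-\omega^i{}_{jj}\theta^i$ is unchanged when $i$ and $j$ are interchanged. Moreover, $\omega^i{}_i=0$, so the expression vanishes when $i=j$. Hence the sum over $i<j$ equals one half of the unrestricted sum over all $i,j$.

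It remains to read off the coefficient of a fixed $\theta^k$ in that unrestricted sum.

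The first term contributes when $j=k$, giving $\sum_i\omega^i{}_{ki}$. After renaming the summation index, this is $\sum_j\omega^j{}_{kj}$.

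The second term contributes when $i=k$, giving $-\sum_j\omega^k{}_{jj}$. By antisymmetry this equals $\sum_j\omega^j{}_{kj}$.

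The two contributions coincide, so after the factor $\tfrac12$ the coefficient of $\theta^k$ is $\sum_j\omega^j{}_{kj}$. This is the claimed formula.

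The main obstacle is sign bookkeeping rather than any conceptual step. Three places need care: the order of the factors in $\iota(\theta^i\wedge\theta^j)$, the convention $\omega^i{}_{jk}=\omega^i{}_j(e_k)$, and the consistency of the quoted contraction identity with the chosen orientation. As a safeguard, I would first verify the identity directly on a basis two-form, for instance with $n=2$ and $\beta=\theta^1\wedge\theta^2$, before relying on it.
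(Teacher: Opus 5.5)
Your proof is correct and takes essentially the same route as the paper. You apply $*(\alpha\wedge *\beta)=(-1)^n\iota_{\alpha^\sharp}\beta$ summand by summand, evaluate $\iota_{(\omega^i{}_j)^\sharp}(\theta^i\wedge\theta^j)=\omega^i{}_{ji}\theta^j-\omega^i{}_{jj}\theta^i$, and collect coefficients using skew-symmetry; your symmetrization to half the unrestricted sum is a cleaner version of the paper's relabeling from $\sum_{i<j}$ to $\sum_{i\neq j}$, and the contraction identity you want to double-check does hold for two-forms under the standard Hodge convention.
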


\begin{proof}
Applying the Hodge identity
\[
(-1)^n*
\left[
\omega^i{}_j\wedge *\bigl(\theta^i\wedge\theta^j\bigr)
\right]
=
\iota_{(\omega^i{}_j)^\sharp}
\bigl(\theta^i\wedge\theta^j\bigr),
\]
where $(\omega^i{}_j)^\sharp$ denotes the vector field corresponding to the
one-form $\omega^i{}_j$ under the musical isomorphism
$\sharp:T^*M\rightarrow TM$, and $\iota_X$ denotes contraction with the
vector field $X$, we obtain
\[
\iota_{(\omega^i{}_j)^\sharp}
\bigl(\theta^i\wedge\theta^j\bigr)
=
\omega^i{}_{ji}\,\theta^j
-
\omega^i{}_{jj}\,\theta^i,
\]
since
\[
\iota_X(\alpha\wedge\beta)
=
\alpha(X)\beta-\beta(X)\alpha
\]
for arbitrary one-forms $\alpha,\beta$.

Summing over all pairs $i<j$, each coefficient
$\omega^i{}_{ji}$ appears exactly once with the basis one-form
$\theta^j$. Hence
\[
(-1)^n
*
\sum_{i<j}
\left[
\omega^i{}_j
\wedge
*
(\theta^i\wedge\theta^j)
\right]
=
\sum_{j=1}^{n}
\left(
\sum_{i=1}^{n}
\omega^i{}_{ji}
\right)\theta^j.
\]

Applying the musical isomorphism $\sharp$ to the resulting one-form gives
the vector field
\[
\vec{\Omega}
=
\sum_{j=1}^{n}
\left(
\sum_{i=1}^{n}
\omega^i{}_{ji}
\right)e_j,
\]
which is precisely the vector field formed by the coefficients of the
first-order part of the Laplace--Beltrami operator.
\end{proof}

\begin{remark}
No orthogonal-coordinate assumption is used in
Theorem~\ref{thm:hodge-contraction}.  If
\[
\omega^i{}_j=\omega^i{}_{jk}\theta^k,
\]
then coefficients with $i,j,k$ pairwise distinct disappear because
\[
\iota_{e_k}(\theta^i\wedge\theta^j)=0
\qquad (k\ne i,j).
\]
Thus the Hodge contraction selects precisely the components that can
contribute to the first-order term.
\end{remark}

\subsection{The first-order vector field}

Let
\[
\vec\Omega:=\Omega^\sharp
=
\sum_{i=1}^n
\left(\sum_{j=1}^n\omega^j{}_{ij}\right)e_i.
\]
The arrow emphasizes that this vector field acts on functions as a
first-order differential operator.

\begin{theorem}\label{thm:laplace-frame}
In every local oriented orthonormal frame,
\[
\Delta
=
\sum_{i=1}^n e_i^{\,2}+\vec\Omega.
\]
Equivalently, for every $f\in C^\infty(M)$,
\[
\Delta f
=
\sum_{i=1}^n e_i e_i f
+
\sum_{i=1}^n
\left(\sum_{j=1}^n\omega^j{}_{ij}\right)e_i f.
\]
\end{theorem}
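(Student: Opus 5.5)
The plan is to start from the trace formula for the Laplace--Beltrami operator fixed in Section~2,
\[
\Delta f=\sum_{i=1}^n\bigl(e_ie_if-(\nabla_{e_i}e_i)f\bigr),
\]
and to show that the correction term $-\sum_i(\nabla_{e_i}e_i)f$ is exactly $\vec\Omega f$. No Hodge theory is needed for this step. Theorem~\ref{thm:hodge-contraction} only identifies $\vec\Omega$ with the Hodge dual of $\widetilde\Omega$, while the present statement concerns the explicit coefficient formula for $\vec\Omega$.

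First I compute $\nabla_{e_i}e_i$ from the definition $\nabla e_j=\omega^k{}_j\,e_k$. Evaluating on $e_i$ and using $\omega^k{}_j(e_i)=\omega^k{}_{ji}$ gives
\[
\nabla_{e_i}e_i=\sum_{k=1}^n\omega^k{}_{ii}\,e_k .
\]
Summing over $i$, the first-order part becomes
\[
-\sum_{i,k}\omega^k{}_{ii}\,e_kf .
\]

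Next I apply metric compatibility in the form $\omega^k{}_{ii}=-\omega^i{}_{ki}$. This turns the expression into
\[
\sum_{k=1}^n\Bigl(\sum_{i=1}^n\omega^i{}_{ki}\Bigr)e_kf .
\]
Renaming $k\to i$ and $i\to j$ yields $\sum_i\bigl(\sum_j\omega^j{}_{ij}\bigr)e_if$, which is precisely $\vec\Omega f$ as defined above. Adding back $\sum_i e_ie_if$ then gives the claimed identity $\Delta=\sum_ie_i^{\,2}+\vec\Omega$.

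The diagonal terms need no separate treatment. Skew-symmetry forces $\omega^i{}_{ii}=0$, so the terms with $j=i$ vanish automatically in the sum over $j$.

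The main obstacle is purely a bookkeeping one. It consists of keeping the index conventions consistent: which slot of $\omega^i{}_{jk}$ is the differentiation direction, and the sign in the antisymmetry relation. A single transposition here would flip the sign of $\vec\Omega$. To guard against this, I would check the formula on a simple example, such as polar coordinates in $\mathbb R^2$ with $e_1=\partial_r$ and $e_2=r^{-1}\partial_\phi$. There the only nonzero coefficient is $\omega^1{}_{22}=-1/r$, which gives $\omega^2{}_{12}=1/r$ and hence $\vec\Omega=r^{-1}\partial_r$, as expected.
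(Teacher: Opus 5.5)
Your proof is correct and follows the paper's own argument: you expand $\nabla_{e_i}e_i=\sum_k\omega^k{}_{ii}e_k$ in the trace formula, apply the skew-symmetry $\omega^k{}_{ii}=-\omega^i{}_{ki}$, and relabel the dummy indices. Your polar-coordinate check, giving $\vec\Omega=r^{-1}\partial_r$, is also correct, though the proof does not need it.
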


\begin{proof}
From the definition of the Laplace--Beltrami operator,
\[
\Delta f
=\sum_{i=1}^n
\left(e_i e_i f-(\nabla_{e_i}e_i)f\right).
\]
Now
\[
\nabla_{e_i}e_i
=\sum_{k=1}^n\omega^k{}_{ii}e_k
=-\sum_{k=1}^n\omega^i{}_{ki}e_k,
\]
where skew-symmetry was used in the last equality.  Relabeling the dummy
indices gives the stated formula.
\end{proof}

\begin{remark}
The connection contains more information than the scalar
Laplace--Beltrami operator uses.  The vectors $\nabla_{e_i}e_i$ describe
the self-acceleration, or geodesic-curvature vectors, of the frame
trajectories.  Coefficients $\omega^i{}_{jk}$ with three pairwise
distinct indices describe mutual rotation of different frame
directions.  They do not enter the first-order part in the chosen
frame, although they remain essential for the full covariant derivative
and curvature.
\end{remark}
\section{Transformation under Local Frame Rotations}

The decomposition in Theorem~\ref{thm:laplace-frame} is associated with
a chosen local orthonormal frame.  Neither the sum of squares nor
$\vec\Omega$ is separately invariant under a point-dependent rotation
of the frame.

Let
\[
e'_a=A_a{}^i e_i,
\qquad A:U\to SO(n),
\]

and let $\theta'^a=A^a{}_i\theta^i$ be the dual coframe.  The connection
matrix transforms by
\[
\omega'=A\omega A^{-1}+A\,dA^{-1}.
\]
The inhomogeneous term is the source of the additional first-order part
in the transformed expression.

\begin{proposition}\label{prop:compensation}
Let $\vec\Omega$ and $\vec\Omega'$ be the vector fields constructed from
the frames $\{e_i\}$ and $\{e'_a\}$, respectively.  Then
\[
\sum_{a=1}^n(e'_a)^2+\vec\Omega'
=
\sum_{i=1}^n e_i^2+\vec\Omega
=
\Delta.
\]
Hence the variations of the two frame-dependent terms compensate
exactly.
\end{proposition}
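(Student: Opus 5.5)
The cheapest route goes through Theorem~\ref{thm:laplace-frame}. That theorem holds for \emph{every} local oriented orthonormal frame, and $\{e'_a\}$ is again such a frame because $A$ takes values in $SO(n)$. Applying it to both frames gives $\sum_i e_i^2+\vec\Omega=\Delta$ and $\sum_a (e'_a)^2+\vec\Omega'=\Delta$, which is already the claim. The frame-independent content is that $\Delta=\operatorname{tr}_g\nabla^2$ is defined without reference to any frame. So the first step is to check that the hypotheses of Theorem~\ref{thm:laplace-frame} hold for $\{e'_a\}$, with the coframe $\theta'^a$ and connection matrix $\omega'$, and to conclude.

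Because the proposition is presented as a compensation statement, I would also verify directly how the two variations cancel. First expand $(e'_a)^2 f = A_a{}^i e_i(A_a{}^j e_j f)$. Orthogonality of $A$ gives $\sum_a A_a{}^iA_a{}^j=\delta^{ij}$, so
\[
\sum_a (e'_a)^2 f=\sum_i e_i^2 f+\sum_{a}A_a{}^i\,e_i(A_a{}^j)\,e_j f .
\]
The extra term is the first-order operator $\sum_a (e'_a A_a{}^j)\,e_j$. Second, compute $\vec\Omega'$ through the invariant description $\vec\Omega=-\sum_i\nabla_{e_i}e_i$ from the proof of Theorem~\ref{thm:laplace-frame}. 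Writing
\[
\nabla_{e'_a}e'_a=A_a{}^iA_a{}^j\nabla_{e_i}e_j+(e'_aA_a{}^j)e_j
\]
and summing over $a$, the first part gives $\sum_i\nabla_{e_i}e_i$ by orthogonality of $A$. Hence
\[
\vec\Omega'=\vec\Omega-\sum_a(e'_aA_a{}^j)e_j .
\]
This cancels the extra term from the squares exactly. Equivalently, in the connection-form language, this correction is the trace contraction of the inhomogeneous term $A\,dA^{-1}$ in $\omega'$.

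The only delicate point is bookkeeping. One must keep index positions and the transpose $A^{-1}=A^{T}$ straight, so that the term coming from $A\,dA^{-1}$ carries the correct sign and matches the derivative term $\sum_a (e'_a A_a{}^j)e_j$. Working with the vector fields $\nabla_{e'_a}e'_a$ rather than the components $\omega'^b{}_{ac}$ avoids most of this, so that is the version I would write out.
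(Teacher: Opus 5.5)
Your proposal is correct, and its main argument is the paper's own: Theorem~\ref{thm:laplace-frame} applies to every orthonormal frame, so both expressions equal the frame-independent operator $\Delta=\operatorname{tr}_g\nabla^2$. Your supplementary direct computation is also correct: $\sum_a(e'_a)^2$ gains the term $\sum_a(e'_aA_a{}^j)e_j$, while $\vec\Omega'=-\sum_a\nabla_{e'_a}e'_a$ loses exactly that term, which makes explicit a cancellation the paper only checks in its two-dimensional example.
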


\begin{proof}
Apply Theorem~\ref{thm:laplace-frame} to each orthonormal frame.  Both
expressions equal the intrinsic Laplace--Beltrami operator.
\end{proof}

\begin{example}[A variable rotation of the Euclidean frame]
On $\mathbb R^2$ with its Euclidean metric, set
\[
e'_1=\cos\varphi\,\partial_x+\sin\varphi\,\partial_y,
\qquad
 e'_2=-\sin\varphi\,\partial_x+\cos\varphi\,\partial_y,
\]
where $\varphi=\varphi(x,y)$.  The dual coframe is
\[
\theta'^1=\cos\varphi\,dx+\sin\varphi\,dy,
\qquad
\theta'^2=-\sin\varphi\,dx+\cos\varphi\,dy.
\]
With the convention $\nabla e'_j=\omega'^i{}_j e'_i$,
\[
\omega'^1{}_2=-d\varphi.
\]
In dimension two, $*(\theta'^1\wedge\theta'^2)=1$, and therefore
\[
\widetilde\Omega'=-d\varphi,
\qquad
\Omega'=-*d\varphi.
\]
Thus
\[
\vec\Omega'
=\varphi_y\,\partial_x-\varphi_x\,\partial_y.
\]
A direct expansion gives
\[
(e'_1)^2+(e'_2)^2
=
\partial_x^2+\partial_y^2
-\varphi_y\partial_x+\varphi_x\partial_y,
\]
and hence
\[
(e'_1)^2+(e'_2)^2+\vec\Omega'
=
\partial_x^2+\partial_y^2.
\]
\end{example}

\begin{remark}
The forms $\widetilde\Omega$ and $\Omega$, and the vector field
$\vec\Omega$, should therefore be understood as frame-associated
quantities.  The intrinsic object is the complete operator
$\sum_i e_i^2+\vec\Omega$.
\end{remark}
\section{Orthogonal Coordinate Frames}

The general construction acquires a particularly transparent geometric
interpretation when the orthonormal frame is induced by an orthogonal
coordinate net.  Let $(x^1,\ldots,x^n)$ be orthogonal coordinates, so
that
\[
g=\sum_{i=1}^n H_i^2 (dx^i)^2,
\qquad
\theta^i=H_i\,dx^i,
\qquad
 e_i=\frac1{H_i}\frac{\partial}{\partial x^i}.
\]

For such a frame the connection forms have the classical special form
\[
\omega^i{}_j
=
\omega^i{}_{ji}\,\theta^i
+
\omega^i{}_{jj}\,\theta^j,
\qquad i\ne j,
\]
or, equivalently,
\[
\omega^i{}_{jk}=0
\qquad\text{whenever }i,j,k\text{ are pairwise distinct}.
\]

For each coordinate trajectory tangent to $e_i$, its
geodesic-curvature vector is
\[
\kappa_i
:=\nabla_{e_i}e_i
=
\sum_{j\ne i}\omega^j{}_{ii}e_j.
\]
Thus every nonzero coefficient of the connection is a component of the
geodesic-curvature vector of one of the coordinate curves.

The distinguished one-form and vector field become
\[
\Omega
=
\sum_{i=1}^n
\left(\sum_{j=1}^n\omega^j{}_{ij}\right)\theta^i,
\]
\[
\vec\Omega
=
\sum_{i=1}^n
\left(\sum_{j=1}^n\omega^j{}_{ij}\right)e_i.
\]
Consequently,
\[
\Delta
=
\sum_{i=1}^n e_i^{\,2}
+
\sum_{i=1}^n
\left(\sum_{j=1}^n\omega^j{}_{ij}\right)e_i.
\]

\begin{proposition}
For an orthogonal coordinate frame, every nonzero coefficient of the
Levi--Civita connection is a component of the geodesic-curvature vector
of a coordinate trajectory.  In particular, there are no additional
coefficients with three pairwise distinct indices that are invisible to
the first-order term.
\end{proposition}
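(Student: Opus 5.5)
The plan is to compute the connection forms of the frame $\theta^i=H_i\,dx^i$ directly from the first Cartan structure equation $d\theta^i=-\omega^i{}_j\wedge\theta^j$. We then show that no coefficient $\omega^i{}_{jk}$ with $i,j,k$ pairwise distinct survives, and finally identify the surviving coefficients with components of the vectors $\kappa_i=\nabla_{e_i}e_i$.

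\textbf{Step 1 (structure equation).} Since $d\theta^i=dH_i\wedge dx^i=\sum_{k\ne i}e_k(H_i)\,H_i^{-1}\,\theta^k\wedge\theta^i$, the structure equation has no term $\theta^j\wedge\theta^k$ with $j,k\ne i$. Writing $\omega^i{}_j=\omega^i{}_{jk}\theta^k$ and expanding $-\omega^i{}_j\wedge\theta^j$, the coefficient of $\theta^j\wedge\theta^k$ for $j<k$ both different from $i$ is $\omega^i{}_{kj}-\omega^i{}_{jk}$. The equation therefore forces
\[
\omega^i{}_{jk}=\omega^i{}_{kj},\qquad i,j,k \text{ pairwise distinct}.
\]

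\textbf{Step 2 (the usual index cycle).} Set $a_{ijk}=\omega^i{}_{jk}$. By metric compatibility $a_{ijk}=-a_{jik}$, and by Step 1 $a_{ijk}=a_{ikj}$ for distinct indices. Alternating these two rules gives
\[
a_{ijk}=a_{ikj}=-a_{kij}=-a_{kji}=a_{jki}=a_{jik}=-a_{ijk},
\]
so $a_{ijk}=0$. This is the step I expect to need the most care: one must verify that each transposition used is legitimate. That holds because every index triple appearing in the chain is still pairwise distinct, so the symmetry from Step 1 applies at each stage. The remaining coefficients are of the form $\omega^i{}_{ji}$ or $\omega^i{}_{jj}$. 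This recovers the stated special form $\omega^i{}_j=\omega^i{}_{ji}\theta^i+\omega^i{}_{jj}\theta^j$ and proves the second sentence of the proposition.

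\textbf{Step 3 (identification with geodesic curvature).} For $i\ne j$, the coefficient $\omega^i{}_{jj}=\theta^i(\nabla_{e_j}e_j)$ is by definition the $e_i$-component of $\kappa_j$. By skew-symmetry, $\omega^i{}_{ji}=-\omega^j{}_{ii}$ is minus the $e_j$-component of $\kappa_i$. The diagonal coefficients $\omega^i{}_{ik}$ vanish by skew-symmetry. Hence every nonzero coefficient is, up to sign, a component of some $\kappa_i$.

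\textbf{Step 4 (conclusion).} By Theorem~\ref{thm:laplace-frame}, the first-order term is built from the contractions $\sum_j\omega^j{}_{ij}$, equivalently from $-\sum_i\kappa_i$. All these coefficients therefore enter the first-order term through the curvature vectors, and no coefficient with three distinct indices remains hidden. One could additionally give the explicit values $\omega^i{}_{jj}=-e_i(H_j)/H_j$, but these are not needed for the statement.
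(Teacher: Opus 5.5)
Your proof is correct. Your Step 3 is exactly the paper's argument. The paper takes the special form $\omega^i{}_j=\omega^i{}_{ji}\theta^i+\omega^i{}_{jj}\theta^j$ as a classical fact, stated without proof just before the proposition. It then observes by skew-symmetry that $\omega^i{}_{jj}$ and $\omega^i{}_{ji}=-\omega^j{}_{ii}$ are components of $\nabla_{e_j}e_j$ and $\nabla_{e_i}e_i$.

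What you add in Steps 1--2 is a derivation of that special form. Since $\theta^i=H_i\,dx^i$, the form $d\theta^i=H_i^{-1}dH_i\wedge\theta^i$ lies in the ideal generated by $\theta^i$. The first structure equation therefore forces $\omega^i{}_{jk}=\omega^i{}_{kj}$ for pairwise distinct indices, and the symmetric/skew index cycle then kills these coefficients. This proves the ``in particular'' clause instead of quoting it. The paper's version is shorter but rests entirely on the assumed classical form.

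Two small points:
\begin{itemize}
\item The coefficient of $\theta^j\wedge\theta^k$ in $-\omega^i{}_l\wedge\theta^l$ is $\omega^i{}_{jk}-\omega^i{}_{kj}$, not the opposite sign you wrote. This is harmless, since the coefficient is set to zero.
\item In Step 4, $\vec\Omega=-\sum_j\kappa_j$ only sees the sums $\sum_j\omega^i{}_{jj}$. So ``enter the first-order term'' holds only in this contracted sense; individual coefficients cannot be recovered from $\vec\Omega$. The statement itself only needs the vanishing of the three-distinct-index coefficients, which your Step 2 establishes.
\end{itemize}
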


\begin{proof}
The special form of $\omega^i{}_j$ shows that only the coefficients of
$\theta^i$ and $\theta^j$ can occur.  By skew-symmetry these coefficients
are precisely the components appearing in $\nabla_{e_i}e_i$ and
$\nabla_{e_j}e_j$.
\end{proof}

\begin{remark}
This explains the distinguished role of orthogonal coordinate systems.
For a general orthonormal frame the connection also contains mutual
rotations of different frame directions.  In an orthogonal coordinate
frame those coefficients vanish, and the nonzero connection data are
completely encoded by the curvature vectors of the coordinate curves.
\end{remark}

Under separate reparametrizations $x^i\mapsto f_i(x^i)$ preserving the
same oriented coordinate net, the normalized coframe $\theta^i$ and
frame $e_i$ remain unchanged.  Hence the associated quantities
$\widetilde\Omega$, $\Omega$, and $\vec\Omega$ depend on the oriented net
rather than on the particular parameters chosen along its coordinate
curves.
\section{Canonical Covariant Derivative and Connection Factorization}

The elimination of first-order terms by completing the square is a
classical technique in the theory of second-order differential
operators. In our previous work~\cite{Nuramatov}, this factorization was
obtained by a scalar gauge transformation associated with an exact
one-form. We now show that the same construction admits a natural
geometric interpretation in terms of the distinguished one-form
$\Omega$ introduced in the preceding sections.

Define the first-order differential operator

\[
D=d-\frac12\,\Omega\wedge .
\]

This operator may be regarded as the covariant derivative associated
with the connection one-form $-\Omega/2$. Unlike the classical gauge
construction, no assumption is made that $\Omega$ is closed or exact.

\begin{theorem}
Let

\[
\Delta
=
\sum_{i=1}^{n}e_i^{\,2}
+
\vec{\Omega}
\]

be the moving-frame representation of the Laplace--Beltrami operator.

Then

\[
*
D
*
D
=
\Delta
-
\vec{\Omega}
+
\frac12\delta\Omega
+
\frac14|\Omega|^2.
\]

Consequently,

\[
*
D
*
D
=
\sum_{i=1}^{n}e_i^{\,2}
+
\frac12\delta\Omega
+
\frac14|\Omega|^2.
\]

Thus the first-order part of the Laplace--Beltrami operator is removed by
the covariant derivative \(D\).
\end{theorem}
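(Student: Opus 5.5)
The plan is to compute $*D*D$ directly on a function $f\in C^\infty(M)$. I will reduce everything to three standard facts about one-forms $\alpha,\beta$:
\[
\alpha\wedge*\beta=\langle\alpha,\beta\rangle\,\mathrm{vol},\qquad *\mathrm{vol}=1,\qquad *d*\alpha=-\delta\alpha .
\]
The last of these is the convention $\delta=-*d*$ fixed in Section~2. I also need that $*d*df=\Delta f$. This follows from $\Delta=\diver\grad$ together with $\diver(\alpha^\sharp)\,\mathrm{vol}=d*\alpha$, and it is consistent with the sign convention $-\delta d=\Delta$ on functions.

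\textbf{Step 1.} Write $Df=df-\tfrac12 f\,\Omega$, a one-form. Then $*Df=*df-\tfrac12 f*\Omega$ is an $(n-1)$-form.

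\textbf{Step 2.} Apply $D$ to this $(n-1)$-form. The wedge term is $-\tfrac12\Omega\wedge*Df$. For the exterior derivative I use the Leibniz rule $d(f*\Omega)=df\wedge*\Omega+f\,d*\Omega$, which gives
\[
D*Df=d*df-\tfrac12\,df\wedge*\Omega-\tfrac12 f\,d*\Omega-\tfrac12\,\Omega\wedge*df+\tfrac14 f\,\Omega\wedge*\Omega .
\]

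\textbf{Step 3.} Apply $*$ term by term, using the three identities above. The first term gives $*d*df=\Delta f$. The two cross terms each become $\langle df,\Omega\rangle=\vec\Omega f$, so together they contribute $-\vec\Omega f$. The term $-\tfrac12 f*d*\Omega$ equals $+\tfrac12(\delta\Omega)f$. The last term gives $\tfrac14|\Omega|^2f$. Adding these yields
\[
*D*D=\Delta-\vec\Omega+\tfrac12\delta\Omega+\tfrac14|\Omega|^2 .
\]

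\textbf{Step 4.} Substitute the frame representation $\Delta=\sum_i e_i^2+\vec\Omega$ from Theorem~\ref{thm:laplace-frame}. The $\vec\Omega$ terms cancel, which gives the second formula.

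The main obstacle is not conceptual but one of signs. I must check that $*\mathrm{vol}=+1$ and that $\alpha\wedge*\beta=\langle\alpha,\beta\rangle\mathrm{vol}$, rather than $*\beta\wedge\alpha$, under the paper's orientation and Hodge conventions. If these were reversed, the ordering $\Omega\wedge*Df$ in the definition of $D$ could introduce a factor $(-1)^{n-1}$. I will verify this first in the basis $\theta^1\wedge\cdots\wedge\theta^n$ with $\alpha=\beta=\theta^1$. Similarly, I will check the sign of $*d*df=\Delta f$ against the convention $\Delta=\diver\grad$, using $\mathbb R^n$ with the standard frame.
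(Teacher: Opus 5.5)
Your proposal is correct and is essentially the paper's own proof. Both expand $*D*D\psi$ into four terms and apply the Leibniz rule to $d(\psi*\Omega)$. Both evaluate the terms via $*(\alpha\wedge*\beta)=\langle\alpha,\beta\rangle$, $*d*=-\delta$ and $*d*d=\Delta$, and then substitute Theorem~\ref{thm:laplace-frame}. Your sign worry does not arise: $\alpha\wedge*\beta=\langle\alpha,\beta\rangle\,\mathrm{vol}$ is the defining property of $*$, so the ordering $\Omega\wedge*Df$ in $D$ introduces no factor $(-1)^{n-1}$.
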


\begin{proof}

For a smooth function $\psi$,

\[
D\psi
=
d\psi
-
\frac12\psi\,\Omega .
\]

Hence

\[
\begin{aligned}
*
D
*
D\psi
&=
*
\left(
d-\frac12\Omega\wedge
\right)
*
\left(
d\psi-\frac12\psi\Omega
\right)
\\
&=
*d*d\psi
-\frac12*d(\psi*\Omega)
-\frac12*(\Omega\wedge*d\psi)
+\frac14\psi*(\Omega\wedge*\Omega).
\end{aligned}
\]

Using the identities established in Section~2 together with

\[
\delta=-*d*,
\]

we obtain

\[
\begin{aligned}
*d(\psi*\Omega)
&=
*(d\psi\wedge*\Omega)
+
\psi\,*d*\Omega
\\
&=
\vec{\Omega}(\psi)
-
(\delta\Omega)\psi ,
\end{aligned}
\]

while

\[
*(\Omega\wedge*d\psi)
=
\vec{\Omega}(\psi),
\]

and

\[
*(\Omega\wedge*\Omega)
=
|\Omega|^2.
\]

Therefore,

\[
*
D
*
D\psi
=
\Delta\psi
-
\vec{\Omega}(\psi)
+
\left(
\frac12\delta\Omega
+
\frac14|\Omega|^2
\right)\psi .
\]

Hence

\[
*
D
*
D
=
\Delta
-
\vec{\Omega}
+
\frac12\delta\Omega
+
\frac14|\Omega|^2 .
\]

Finally, using the decomposition of the Laplace--Beltrami operator
obtained in Section~3,

\[
\Delta
=
\sum_{i=1}^{n}e_i^{\,2}
+
\vec{\Omega},
\]

we obtain

\[
*
D
*
D
=
\sum_{i=1}^{n}e_i^{\,2}
+
\frac12\delta\Omega
+
\frac14|\Omega|^2.
\]

\end{proof}

The above theorem provides a geometric interpretation of the first-order
part of the Laplace--Beltrami operator. The distinguished one-form
$\Omega$ canonically determines the covariant derivative

\[
D=d-\frac12\Omega\wedge,
\]

whose associated second-order operator removes the first-order part of
the Laplace--Beltrami operator and replaces it by the scalar potential

\[
V_\Omega
=
\frac12\delta\Omega
+
\frac14|\Omega|^2.
\]

The coefficient \(1/2\) is uniquely determined by the cancellation of
the first-order terms and is therefore intrinsic to the factorization.

Thus the Laplace--Beltrami operator canonically determines the
distinguished vector field \(\vec{\Omega}\), the associated one-form
\(\Omega\), the covariant derivative

\[
D=d-\frac12\Omega\wedge,
\]

and the scalar potential

\[
V_\Omega
=
\frac12\delta\Omega
+
\frac14|\Omega|^2.
\]

This completes the geometric interpretation of the first-order part of
the Laplace--Beltrami operator developed in the present paper.
\section{Three-Dimensional Interpretation}

Assume now that $n=3$ and that
\[
\theta^1\wedge\theta^2\wedge\theta^3
\]
is positively oriented.  Then
\[
*(\theta^1\wedge\theta^2)=\theta^3,
\qquad
*(\theta^2\wedge\theta^3)=\theta^1,
\qquad
*(\theta^3\wedge\theta^1)=\theta^2.
\]
Therefore
\[
\widetilde\Omega
=
\omega^1{}_2\wedge\theta^3
+
\omega^2{}_3\wedge\theta^1
+
\omega^3{}_1\wedge\theta^2.
\]

Introduce the Darboux rotation one-form with values in the tangent
bundle,
\[
\boldsymbol\omega_D
=
\omega^2{}_3\,e_1
+
\omega^3{}_1\,e_2
+
\omega^1{}_2\,e_3.
\]
It satisfies
\[
d e_i=\boldsymbol\omega_D\times e_i,
\]
where the equation is understood as an equality of tangent-vector-valued
one-forms.

The expression for $\widetilde\Omega$ pairs each infinitesimal rotation
in a coordinate plane with the orthogonal one-dimensional moment
element.  Hence, in three dimensions, $\widetilde\Omega$ is the scalar
form counterpart of the Darboux rotation form.  The one-form
\[
\Omega=-*\widetilde\Omega
\]
and its dual vector field $\vec\Omega$ retain precisely the contraction
of this rotational information that enters the scalar
Laplace--Beltrami operator.

\begin{remark}
The Darboux form contains the full infinitesimal rotation of the frame,
whereas $\vec\Omega$ contains only the contracted component required by
the scalar Laplacian.  For an orthogonal coordinate frame no connection
coefficients with three distinct indices occur, so this contraction is
completely determined by the curvature vectors of the coordinate
trajectories.
\end{remark}
\section{Conclusion}

We have identified, relative to a local orthonormal frame, the
distinguished part of the Levi--Civita connection that forms the
first-order term of the Laplace--Beltrami operator.  Starting from the
connection forms, the $(n-1)$-form $\widetilde\Omega$ produces the
one-form $\Omega$ by Hodge duality, and the dual vector field
$\vec\Omega$ yields the frame representation
\[
\Delta=\sum_i e_i^2+\vec\Omega.
\]

The individual terms in this decomposition depend on the frame, but
their variations under local frame rotations compensate exactly.  For
orthogonal coordinate frames, coefficients with three pairwise
distinct indices vanish, and every nonzero connection coefficient is a
component of the geodesic-curvature vector of a coordinate curve.  This
explains why the first-order structure is especially transparent for
orthogonal coordinate nets.

Finally, shifting the exterior derivative by $\Omega/2$ removes the
first-order term and produces a scalar geometric potential, while in
three dimensions the construction is naturally related to the Darboux
rotation form.  These observations place the first-order part of the
Laplace--Beltrami operator within the moving-frame geometry of the
Levi--Civita connection.

\end{document}